\def\E{{\mathbb E}}
\newtheorem{theorem}{Theorem}
\newtheorem{lemma}{Lemma}
\newtheorem{corollary}{Corollary}
\newtheorem{remark}{Remark}
\title{Central Limit Theorem for the Elephant Random Walk}
\author{Cristian F. Coletti (1), Renato Gava (2) and Gunter M. Sch\"utz (3)}
\date{\today}
\address{
\newline 
(1) UFABC - Centro de Matem\'atica, Computa\c{c}\~ao e Cogni\c{c}\~ao.
\newline
Avenida dos Estados, 5001, Santo Andr\'e - S\~ao Paulo, Brasil
\newline
e-mail:  \rm \texttt{cristian.coletti@ufabc.edu.br}
\newline 
\newline
(2) UFSCAR - Departamento de Estat\'{\i}stica.
\newline  Rodovia Washington Luiz, Km 235, CEP 13565-905, S\~ao Carlos, Brasil
\newline
e-mail:  \rm \texttt{gava@ufscar.br} 
\newline
\newline
Institute of Complex Systems II
\newline
(3) Forschungszentrum J\"ulich, 52425 J\"ulich, Germany
\newline
e-mail:  \rm \texttt{g.schuetz@fz-juelich.de}
}
\begin{document}

\maketitle

Abstract: We study the so-called elephant random walk which is a non-Markovian
discrete-time random walk on $\mathbb{Z}$ with unbounded memory which exhibits
a phase transition from diffusive to superdiffusive behaviour. We prove a law of large
numbers and a central limit theorem. Remarkably the central limit theorem applies
not only to the diffusive regime but also to the phase transition
point which is superdiffusive. Inside the superdiffusive regime the 
ERW converges to a non-degenerate random variable which is not normal. 
We also obtain explicit expressions for the correlations of increments of the
ERW.
%%%%%%%%%%%%%%%%%%%%%%%%%%%%%%%%%%%%%%%%%%%%%%%%%%%%%%%%%%%%%%%%%%%%%%%%%%%%%%%%%%%

%%%%%%%%%%%%%%%%%%%%%%%%%%%%%%%%%%%%%%%%%%%%%%%%%%%%%%%%%%%%%%%%%%%%%%%%%%%%%%%%%%%

\section{Introduction}\label{intro}

In this paper we consider a model presented in \cite{ST}, the so-called elephant random walk
(ERW). This is a discrete-time random walk $X_n$ on $\mathbb{Z}$ with unbounded memory 
whose random increments at each time step depend on the whole history of the process. 
This model a rare example of a non-Markovian process
for which exact results on the moments of $X_n$ are available \cite{ST,Para06,daSi13}.
Also some large deviation results have been obtained \cite{Harr09}.
Significantly, there is a phase transition from diffusive to superdiffusive behaviour
as a function of a memory parameter $p$ (defined below) at a critical value $p_c$. 
A modified ERW was shown to exhibit also subdiffusion \cite{Harb14}. Recently a
surprising connection between the ERW and bond percolation on random recursive trees
has been found \cite{Kuer16}. 

An open question that has remained is the actual probability distribution of $X_n$ in 
the original ERW. Initially it was suggested that on large scales 
the probability density is Gaussian, obeying a Fokker-Planck equation with
time-dependent drift term \cite{ST}. Later it was conjectured that the distribution
is Gaussian in the diffusive regime, but not in the superdiffusive regime 
\cite{Para06,daSi13}. Here we address this problem. We prove rigorously a law of large numbers
that is valid for any non-trivial value of the memory parameter and a central limit theorem
that is valid in the diffusive regime as well as at the critical value $p_c$ where
the model is already superdiffusive. Inside the superdiffusive regime $X_n$ is shown to be
a non-degenerate but non-normal random variable. Some further insight is obtained
by a precise result on the correlations between the random increments.

We would like to point out that the study of the limiting behavior of the ERW may be related to the
study of limit theorems for a class of correlated Bernoulli processes.
We refer the reader interested in this approach to the work of Lan Wu et al. \cite{WuQiYang} and references therein.
In this context, it is worth mentioning the recent work of Gonz\'alez-Navarrete and Lambert \cite{GNL} where the authors propose an approach to construct and study
Bernoulli sequences combining dependence and independence periods.

The paper is organized as follows. In the next section we define the process and present
our main results which are proved in Sec. 4. In Sec. 3 we derive some auxiliary results 
that are used in the proofs of Sec. 4. We mention that upon completion of this work
we were made aware of related results on the ERW by Baur and Bertoin \cite{Baur16}.

\section{Definition of the ERW and main results}

The ERW is defined as follows. The walk starts at a specific point $X_0$ 
at time $n=0$. At each discrete time step the elephant moves one step to the right or to left respectively, so 
\begin{equation}
X_{n+1} = X_n + \eta_{n+1}
\end{equation}
where $\eta_{n+1} = \pm 1$ is a random variable. The memory consists of the set of random variables $\eta_{n^{\prime}}$
at previous time steps which the elephant remembers as follows:

\noindent $(D_1)$ At time $n+1$ a number $n^{\prime}$ from the set $\{1,2, \ldots , n\}$ is chosen at random with probability $1/n$.

\noindent $(D_2)$ $\eta_{n+1}$ is determined stochastically by the rule 
\begin{align*}
\eta_{n+1} = \eta_n^{\prime} \, \text{ with probability } \, p \, \text{ and } \, \eta_{n+1} = -\eta_n^{\prime} \, \text{ with probability } 1-p.
\end{align*}

\noindent $(D_3)$ The elephant starting at $X_0$ moves to the right with probability $q$ and to the left with probability $1-q$, i.e.,
\begin{align*}
\eta_{1} = 1  \, \text{ with probability } \, q \, \text{ and } \, \eta_{1} = -1 \, \text{ with probability } 1-q.
\end{align*}

\noindent It is obvious from the definition that
\begin{equation}
X_n = X_0 + \sum_{k=1}^n \eta_k.
\end{equation}

From now on we consider $X_0=0$. Therefore, $X_n = \displaystyle \sum_{k=1}^n \eta_k$.
A simple computation yields 
\begin{equation}\label{conditional}
\mathbb{P}[\eta_{n+1} = \eta|\eta_1, \ldots, \eta_n] = \frac{1}{2n} \sum_{k=1}^n \left[1+\left(2p-1\right)\eta_k \eta\right] \ \mbox{for} \ n \geq 1,
\end{equation}
where $\eta = \pm 1$. For $n=0$ we get in accordance with rule $(D_3)$
\begin{equation} 
\mathbb{P}[\eta_1 = \eta] = \frac{1}{2}  \left[1+\left(2q-1\right)\eta\right]
\end{equation}
and
\begin{equation}
\mathbb{E}[\eta_1] = 2q- 1.
\end{equation}
The conditional expectation of the increment $\eta_{n+1}$ given its previous history is given by
\begin{equation}\label{conditional2} 
\mathbb{E}[\eta_{n+1} |\eta_1, \ldots, \eta_n] = (2p-1) \frac{X_n}{n} \ \mbox{for} \ n \geq 2.
\end{equation}

\begin{remark}
Sch\"utz and Trimper \cite{ST} showed that
%\[
%\mathbb{E}[\eta_{n+1}] = \frac{(2p-1)}{n} \mathbb{E}[X_n] 
%\]
%and that
\begin{eqnarray} \label{mean}
\mathbb{E}[X_n] = (2q-1) \frac{\Gamma(n+(2p-1))}{\Gamma(2p)\Gamma(n)} \sim \frac{2q-1}{\Gamma(2p)} n^{2p-1} \ \mbox{for} \ n \gg 1.
\end{eqnarray}
Therefore,
\begin{eqnarray}\label{asymptoticexpectation}
\mathbb{E}[\eta_{n+1}] &\sim&  \frac{\left(2p-1\right) \left(2q-1\right)}{\Gamma(2p)} n^{2(p-1)}. 
\end{eqnarray}
\end{remark}

%The conditional probability that the increment $\eta_{n+1}$ takes the value $1$ is given by
%\begin{equation} 
%\mathbb{P}[\eta_{n+1} = 1 |\eta_1, \ldots, \eta_n] = \frac{1}{2} + \frac{2p-1}{2} \frac{X_n}{n} \ \mbox{for} \ n \geq 2.
%\end{equation}

%The conditional probability that the increment $\eta_{n+1}$ takes the value $-1$ is given by
%\begin{equation}\label{conditional1} 
%\mathbb{P}[\eta_{n+1} = -1 |\eta_1, \ldots, \eta_n] = \frac{1}{2} - \frac{2p-1}{2} \frac{X_n}{n} \ \mbox{for} \ n \geq 2.
%\end{equation}

Now we state the main results. 

\begin{theorem}\label{lln}
Let $(X_n)_{n \geq1}$ be the elephant random walk. Then
\begin{eqnarray}
\lim_{n \rightarrow \infty} \frac{X_n - \mathbb{E}[X_n]}{n} = 0 \ \mbox{a.s.} \nonumber
\end{eqnarray}
for any value of $q$ and $p \in [0,1)$.
\end{theorem}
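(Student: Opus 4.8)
The plan is to exhibit an explicit martingale hidden inside the ERW and then invoke the strong law of large numbers for martingales. Write $\alpha = 2p-1 \in [-1,1)$ and let $\mathcal{F}_n = \sigma(\eta_1,\dots,\eta_n)$ be the natural filtration. The starting point is (\ref{conditional2}), which gives $\mathbb{E}[X_{n+1}\mid\mathcal{F}_n] = (1+\alpha/n)X_n = \frac{n+\alpha}{n}X_n$. This multiplicative structure suggests compensating by the deterministic sequence $a_n := \Gamma(n)/\Gamma(n+\alpha)$ (equivalently $a_n = \prod_{k=1}^{n-1} k/(k+\alpha)$), which satisfies $a_{n+1}/a_n = n/(n+\alpha)$. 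First I would set $M_n := a_n X_n$ and check directly from (\ref{conditional2}) that $\mathbb{E}[M_{n+1}\mid\mathcal{F}_n] = a_{n+1}\frac{n+\alpha}{n}X_n = a_n X_n = M_n$, so that $(M_n,\mathcal{F}_n)$ is a martingale. A small care is needed near $p=0$, where $\Gamma(2p)$ has a pole; there one simply starts the martingale at $n=2$, which is immaterial for the asymptotics.

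Next I would control the increments. Writing $\Delta M_n = M_{n+1}-M_n$ and using $a_n = a_{n+1}\frac{n+\alpha}{n}$ one obtains $\Delta M_n = a_{n+1}\bigl(\eta_{n+1} - \frac{\alpha}{n}X_n\bigr)$. Since $\eta_{n+1}\in\{-1,+1\}$ has conditional mean $\frac{\alpha}{n}X_n$, the bracket is exactly the centred increment, whence $\mathbb{E}[(\Delta M_n)^2\mid\mathcal{F}_n] = a_{n+1}^2\bigl(1-(\alpha X_n/n)^2\bigr) \le a_{n+1}^2$. This uniform bound is the key estimate.

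Now set $b_n := n\,a_n$. Two elementary facts must be verified: $(b_n)$ is increasing, because $b_{n+1}/b_n = \frac{n+1}{n}\cdot\frac{a_{n+1}}{a_n} = \frac{n+1}{n+\alpha} > 1$ as $\alpha<1$; and $b_n \to \infty$, because $a_n \sim c\,n^{-\alpha}$ gives $b_n \sim c\,n^{1-\alpha} = c\,n^{2(1-p)}$ with exponent $2(1-p)>0$ precisely for $p<1$. This is exactly where the hypothesis $p\in[0,1)$ enters. Combining this with the variance bound, $\sum_{n} b_{n+1}^{-2}\,\mathbb{E}[(\Delta M_n)^2\mid\mathcal{F}_n] \le \sum_n a_{n+1}^2/((n+1)a_{n+1})^2 = \sum_n (n+1)^{-2} < \infty$.

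Finally I would apply the martingale strong law (via the $L^2$ martingale convergence theorem applied to $\sum_k \Delta M_k/b_{k+1}$, followed by Kronecker's lemma): the displayed summability together with $b_n\uparrow\infty$ yields $M_n/b_n \to 0$ a.s. Because $\mathbb{E}[M_n]$ is constant along the martingale and $X_n = M_n/a_n$ gives $\mathbb{E}[X_n]=\mathbb{E}[M_n]/a_n$, one has $\frac{X_n-\mathbb{E}[X_n]}{n} = \frac{M_n-\mathbb{E}[M_n]}{b_n}\to 0$ a.s., which is the claim. I do not expect a serious obstacle here; the single genuine insight is guessing the Gamma-ratio normalization $a_n$ that turns $X_n$ into a martingale, and the only point requiring attention is confirming $b_n=na_n\to\infty$ across the whole range $p<1$. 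Note this holds even at and beyond the diffusive/superdiffusive threshold $p_c=3/4$, in contrast to the central limit theorem, whose normalization is sensitive to that threshold.
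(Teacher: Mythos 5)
Your proof is correct and follows essentially the same route as the paper: your martingale $a_nX_n$ is, up to the constant $\Gamma(2p)$ and centering, the paper's $M_n=(X_n-\mathbb{E}[X_n])/a_n$ (the paper's $a_n$ is the reciprocal of yours), and both arguments conclude by bounding the normalized conditional variances by $\sum_n n^{-2}$ and combining martingale $L^2$-convergence with Kronecker's lemma for the normalizer $na_n\sim n^{2(1-p)}\uparrow\infty$. The only cosmetic differences are that the paper centers before forming the martingale and uses the crude increment bound $|D_j|\le 4/a_j$ in place of your exact conditional variance computation.
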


\begin{remark}
If $p=1$, which is not covered by the law of large numbers of Theorem \ref{lln}, 
the ERW is trivial since by definition of the process one then has $\eta_n=\eta_1$
for all $n\geq 1$. Hence $X_n / n = \eta_1$ reduces to a binary random variable. 
\end{remark}

Interestingly, as the next two theorems demonstrate, the non-Markovian nature of the
ERW is somewhat disguised for $p \leq 3/4$, but shows up for $p>3/4$.

\begin{theorem}\label{clt}
Let $(X_n)_{n \geq 1}$ be the elephant random walk and let $p \leq 3/4$. 
(a) If $p < 3/4$, then 
\begin{align}
\dfrac{X_n - \dfrac{2q - 1}{\Gamma(2p)}n^{2p -1}}{\sqrt{\dfrac{n}{3 - 4p}}} \xrightarrow{\mathcal{D}} \mathcal{N}(0,1). \nonumber
\end{align}
(b) If $p = 3/4$, then
\begin{align}
 \dfrac{X_n - \dfrac{2q - 1}{\Gamma(3/2)}n^{1/2}}{\sqrt{n \ln n}} \xrightarrow{\mathcal{D}} \mathcal{N}(0,1). \nonumber
\end{align}
\end{theorem}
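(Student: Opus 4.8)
The plan is to uncover a natural martingale inside the ERW and feed it into a martingale central limit theorem. Writing $\mathcal{F}_n=\sigma(\eta_1,\dots,\eta_n)$ and using \eqref{conditional2}, one has $\mathbb{E}[X_{n+1}\mid\mathcal{F}_n]=X_n\,(n+2p-1)/n$, which suggests setting
\[
a_n=\frac{\Gamma(n+2p-1)}{\Gamma(2p)\,\Gamma(n)},\qquad M_n=\frac{X_n}{a_n},
\]
so that $a_{n+1}/a_n=(n+2p-1)/n$ and $M_n$ is an $(\mathcal{F}_n)$-martingale with $\mathbb{E}[M_n]=2q-1$. By Stirling $a_n\sim n^{2p-1}/\Gamma(2p)$, consistent with \eqref{mean}. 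The centred process $\widetilde M_n=M_n-(2q-1)$ is again a martingale with the same increments, and since $X_n-\mathbb{E}[X_n]=a_n\widetilde M_n$, it suffices to prove a CLT for $\widetilde M_n$ and transfer it back.

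First I would compute the increments. From $X_{n+1}=X_n+\eta_{n+1}$ and the recursion for $a_n$,
\[
\Delta M_{n+1}:=M_{n+1}-M_n=\frac{1}{a_{n+1}}\bigl(\eta_{n+1}-\mathbb{E}[\eta_{n+1}\mid\mathcal{F}_n]\bigr),
\]
and since $\eta_{n+1}^2=1$ the conditional variance is
\[
\mathbb{E}[(\Delta M_{n+1})^2\mid\mathcal{F}_n]=\frac{1}{a_{n+1}^2}\Bigl(1-(2p-1)^2\frac{X_n^2}{n^2}\Bigr).
\]
Summing gives the predictable quadratic variation $V_n=\sum_{k=2}^n\mathbb{E}[(\Delta M_k)^2\mid\mathcal{F}_{k-1}]=A_n-B_n$, where $A_n=\sum_{k=2}^n a_k^{-2}$ is deterministic and $B_n=(2p-1)^2\sum_{k=2}^n a_k^{-2}X_{k-1}^2/(k-1)^2\ge 0$. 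Using $a_k^{-2}\sim\Gamma(2p)^2k^{2-4p}$, a calculus estimate yields $A_n\sim\frac{\Gamma(2p)^2}{3-4p}\,n^{3-4p}$ when $p<3/4$ and $A_n\sim\Gamma(3/2)^2\ln n$ when $p=3/4$. I set $s_n^2:=A_n$ and check that $a_n^2 s_n^2$ is the square of the stated normalizers: $a_n^2 A_n\sim n/(3-4p)$ for $p<3/4$, and $a_n^2 A_n\sim n\ln n$ for $p=3/4$.

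It then remains to verify the two hypotheses of the martingale CLT. For the conditional-variance condition $V_n/s_n^2\to 1$ in probability, the deterministic part already gives $A_n/s_n^2=1$, so I only need $B_n/s_n^2\to 0$; since $B_n\ge 0$ this follows from Markov's inequality once $\mathbb{E}[B_n]=o(s_n^2)$, which I would obtain from the second-moment asymptotics of Section 3, namely $\mathbb{E}[X_k^2]=O(k)$ for $p<3/4$ and $O(k\ln k)$ for $p=3/4$: in both cases $\mathbb{E}[B_n]=\sum_k a_k^{-2}\,\mathbb{E}[X_{k-1}^2]/(k-1)^2$ is of strictly smaller order than $s_n^2$. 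The Lindeberg condition is easy because the increments are bounded, $|\Delta M_k|\le 2/a_k$, and a short computation gives $\max_{k\le n}|\Delta M_k|/s_n\to 0$ (the worst case $k=n$ contributes $O(n^{-1/2})$ when $p<1/2$, while for $p\ge 1/2$ the bound $2/a_k$ is uniformly bounded and $s_n\to\infty$), so the conditional Lindeberg sums vanish for large $n$.

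A standard martingale central limit theorem then gives $\widetilde M_n/s_n\xrightarrow{\mathcal{D}}\mathcal{N}(0,1)$, hence $(X_n-\mathbb{E}[X_n])/(a_n s_n)\xrightarrow{\mathcal{D}}\mathcal{N}(0,1)$. Finally I replace $\mathbb{E}[X_n]$ by its leading term $\frac{2q-1}{\Gamma(2p)}n^{2p-1}$ (the error is $O(n^{2p-2})=o(\sqrt{n})$, negligible against either normalizer) and replace $a_n s_n$ by the explicit normalizer via the asymptotics of $A_n$ above, so that Slutsky's lemma yields both (a) and (b). The main obstacle is the conditional-variance condition, that is, showing the random correction $B_n$ is negligible relative to $s_n^2$; once the variance bounds from Section 3 are in hand, everything else is bookkeeping with Gamma-function asymptotics.
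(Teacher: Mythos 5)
Your proposal is correct and follows essentially the same route as the paper: the same martingale $M_n=(X_n-\mathbb{E}[X_n])/a_n$ with $a_n=\Gamma(n+2p-1)/(\Gamma(2p)\Gamma(n))$, the same normalizer $s_n^2\asymp\sum_{k\le n}a_k^{-2}$, and the same martingale CLT verified through the conditional Lindeberg and conditional-variance conditions. The only (immaterial) difference is that you control the random part $B_n$ of the predictable quadratic variation by Markov's inequality using the explicit second-moment asymptotics, whereas the paper deduces that it is negligible almost surely from the strong law of large numbers of Theorem \ref{lln}.
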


\begin{theorem}\label{superdiff}
Let $(X_n)_{n \geq 1}$ be the elephant random walk. If $3/4 < p \leq 1$, then
\begin{eqnarray}
\frac{X_n}{n^{2p -1} \Gamma(2p)^{-1}} - (2q -1) \to M \text{ a.s. } \nonumber, 
\end{eqnarray}
where $M$ is a non-degenerate mean zero random variable, but not a
normal random variable.
\end{theorem}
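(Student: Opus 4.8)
The plan is to exploit the martingale structure hidden in the conditional-expectation formula \eqref{conditional2}. Write $\alpha = 2p-1 \in (1/2,1]$ and let $\mathcal{F}_n = \sigma(\eta_1,\dots,\eta_n)$. Since $\mathbb{E}[X_{n+1}\mid\mathcal{F}_n] = X_n(1+\alpha/n) = X_n(n+\alpha)/n$, I would set
\[
a_n = \frac{\Gamma(n+2p-1)}{\Gamma(2p)\Gamma(n)},
\]
which satisfies $a_{n+1}/a_n = (n+\alpha)/n$ and $a_n \sim n^{2p-1}/\Gamma(2p)$, and define $M_n = X_n/a_n$. A one-line computation then shows that $(M_n,\mathcal{F}_n)$ is a martingale, and by \eqref{mean} we have $\mathbb{E}[M_n] = \mathbb{E}[X_n]/a_n = 2q-1$.

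Next I would prove that $(M_n)$ is bounded in $L^2$ precisely when $p>3/4$. Using $\eta_{n+1}^2=1$ and \eqref{conditional2} one gets $\mathbb{E}[X_{n+1}^2\mid\mathcal{F}_n] = X_n^2(1+2\alpha/n)+1$, so that $u_n := \mathbb{E}[X_n^2]$ obeys the linear recursion $u_{n+1} = u_n(n+2\alpha)/n + 1$. Dividing by the homogeneous solution $g_n = \Gamma(n+2\alpha)/\Gamma(n) \sim n^{2\alpha}$ telescopes this into $u_n/g_n = \text{const} + \sum_k g_k^{-1}$; since $2\alpha>1$ the series converges and $u_n \sim c\,n^{2\alpha}$, whence $\sup_n \mathbb{E}[M_n^2] = \sup_n u_n/a_n^2 < \infty$. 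By the $L^2$ martingale convergence theorem $M_n \to \tilde M$ almost surely and in $L^2$; setting $M = \tilde M - (2q-1)$ gives a mean-zero limit, and because $a_n \sim n^{2p-1}/\Gamma(2p)$ this is exactly the convergence claimed. Non-degeneracy follows since $\mathrm{Var}(M) = \lim_n u_n/a_n^2 - (2q-1)^2 > 0$.

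Finally I would show $M$ is not Gaussian by comparing moments. The same scheme applied to $\eta_{n+1}^{2k}=1$ and $\eta_{n+1}^{2k+1}=\eta_{n+1}$ yields, for every $k$, a linear recursion expressing $\mathbb{E}[X_n^{2k}]$ in terms of lower moments, with dominant homogeneous growth $n^{2k\alpha}$; since $2\alpha>1$ each lower-order source term is summable after normalization, so $\mathbb{E}[X_n^{2k}] \sim c_k\, n^{2k\alpha}$ and hence $\sup_n \mathbb{E}[|M_n|^{2k}] < \infty$ for every $k$. This uniform boundedness of all moments gives uniform integrability of every power of $M_n$, so all moments converge, $\mathbb{E}[M^m] = \lim_n \mathbb{E}[(M_n-(2q-1))^m]$. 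Carrying out the $k=2$ case explicitly produces the fourth moment of $M$, and I would check that the resulting kurtosis $\mathbb{E}[M^4]/\mathrm{Var}(M)^2$ differs from $3$; since every Gaussian law has kurtosis exactly $3$, this rules out normality. At the endpoint $p=1$ the walk degenerates to $X_n = n\eta_1$, so $M = \eta_1 - (2q-1)$ is a non-degenerate two-valued, manifestly non-normal, random variable, consistent with this computation.

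The hard part will be the last step: assembling the fourth-moment recursion correctly and verifying that the kurtosis is genuinely $\neq 3$ throughout $3/4<p<1$ for every $q$. For $q\neq 1/2$ the limit $M$ is asymmetric, so one must track central rather than raw moments, which makes the bookkeeping in the moment recursions the principal technical obstacle; by contrast, the martingale identification and the $L^2$ convergence are routine once $a_n$ is chosen as above.
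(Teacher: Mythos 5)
Your martingale argument for the convergence is correct and is essentially the argument in the paper: with $a_n=\Gamma(n+2p-1)/(\Gamma(2p)\Gamma(n))$ the process $X_n/a_n$ (equivalently $(X_n-\mathbb{E}[X_n])/a_n$, since $\mathbb{E}[X_n]/a_n=2q-1$ is constant by \eqref{mean}) is a martingale, and the whole point is that it is bounded in $\mathcal{L}^2$ exactly when $p>3/4$, so that a.s.\ and $\mathcal{L}^2$ convergence follow from the $\mathcal{L}^2$ martingale convergence theorem (Theorem 12.1 in \cite{Wi}). The only difference is how $\mathcal{L}^2$-boundedness is obtained: you telescope the second-moment recursion $u_{n+1}=u_n(n+2\alpha)/n+1$ to get $\mathbb{E}[X_n^2]\sim c\,n^{4p-2}$, whereas the paper uses the increment bound \eqref{4aj}, $|D_j|\le 4/a_j$, together with Lemma \ref{diverges} ($\sum a_j^{-2}<\infty$ iff $p>3/4$). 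Both are fine; your route has the mild advantage of producing the variance constant explicitly (and indeed reproduces the closed form of $\mathbb{E}[X_n^2]$ used later in the proof of Theorem \ref{corr}). One small point you should make explicit: $\mathrm{Var}(M)>0$ needs a word, e.g.\ $\mathrm{Var}(M_n)=\sum_{j\le n}\mathbb{E}[D_j^2]$ is nondecreasing and already $\mathrm{Var}(M_2)>0$ whenever $p<1$ (for $p=1$, $q\in\{0,1\}$ the limit is in fact degenerate, an edge case the theorem statement glosses over).

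The genuine gap is the non-normality claim, which in your write-up is a program rather than a proof: everything hinges on the assertion that the limiting kurtosis $\mathbb{E}[M^4]/\mathrm{Var}(M)^2$ differs from $3$ for all $3/4<p<1$ and all $q$, and you explicitly defer that computation. Your surrounding scaffolding is sound --- the moment recursions close order by order, the source terms are summable after normalization because $4p-2>1$, and uniform boundedness of all $\mathbb{E}[|M_n|^{2k}]$ does give convergence of every moment of $M_n$ to the corresponding moment of $M$ --- but without actually assembling the fourth-moment (or, for $q\ne 1/2$, third-moment) constant and checking it against the Gaussian value, normality has not been excluded. This is precisely the step the paper does not redo either: it invokes the exact skewness and kurtosis of $X_n$ computed in \cite{Para06}. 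So either carry out the $k=2$ recursion to the end (it is finite but genuinely tedious, especially for the centered moments when $q\ne 1/2$), or cite the known exact moments as the paper does; as it stands, the last third of your argument is an honest outline of \cite{Para06} rather than a proof.
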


The absence of convergence to a normal r.v. for $p>3/4$ suggests that cross-correlations
between the increments $\eta_n$ are in some sense ``too strong''. The following theorem
quantifies these correlations.

\begin{theorem}\label{corr}
Let $(X_n)_{n \geq 1}$ be the elephant random walk with $q=1/2$ and $0 \leq p \leq 1$
and define
\begin{eqnarray}
%\label{2-20}
F(n) := \frac{2p-1}{3-4p }\left(\frac{2(1-p)}{n} - \frac{(2p-1)
\Gamma(n+4p-3)}{\Gamma(4p-2)n!}\right) . \nonumber
\end{eqnarray}
Then
$\mathbb{E} [\eta_n] =0$ and
\begin{eqnarray}
\mathbb{E} [\eta_n \eta_{n+k}] = \frac{n! \Gamma(2p+n+k-2)}
{(n+k-1)! \Gamma(2p-1+n)} F(n) \nonumber
\end{eqnarray}
for all $n\geq 1$ and $k\geq 1$.
\end{theorem}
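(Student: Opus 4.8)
The plan is to reduce everything to the single conditional-expectation identity \eqref{conditional2} and to study two families of moments: the mixed moments $s(n,m) := \mathbb{E}[\eta_n X_m]$ (for $m \geq n$) and the second moments $u_m := \mathbb{E}[X_m^2]$. Throughout write $\beta := 2p-1$ and let $\mathcal{F}_n := \sigma(\eta_1,\dots,\eta_n)$. First, the claim $\mathbb{E}[\eta_n]=0$ is immediate: with $q=1/2$ the formula \eqref{mean} gives $\mathbb{E}[X_n]=0$ for every $n$, so $\mathbb{E}[\eta_{n+1}]=\mathbb{E}[X_{n+1}]-\mathbb{E}[X_n]=0$ (one may also induct directly on \eqref{conditional2}). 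For the correlations, the starting point is to condition the later increment: since $\eta_n$ is $\mathcal{F}_{n+k-1}$-measurable, the tower property and \eqref{conditional2} give $\mathbb{E}[\eta_n\eta_{n+k}] = \frac{\beta}{n+k-1}\,\mathbb{E}[\eta_n X_{n+k-1}] = \frac{\beta}{n+k-1}\,s(n,n+k-1)$, so it suffices to determine $s(n,\cdot)$.

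The next step is to solve for $s(n,m)$ by a first-order recursion in $m$. Writing $X_{m+1}=X_m+\eta_{m+1}$ and conditioning the last term as above yields $s(n,m+1)=s(n,m)+\frac{\beta}{m}s(n,m)=s(n,m)\,\frac{m+\beta}{m}$ for $m\geq n$. This telescopes to a ratio of Gamma functions, $s(n,m)=s(n,n)\,\frac{\Gamma(m+\beta)\,\Gamma(n)}{\Gamma(n+\beta)\,\Gamma(m)}$, which reduces the whole problem to computing the single initial value $s(n,n)=\mathbb{E}[\eta_n X_n]$. Here I would write $s(n,n)=\mathbb{E}[\eta_n X_{n-1}]+\mathbb{E}[\eta_n^2]$; using \eqref{conditional2} once more on $\mathbb{E}[\eta_n X_{n-1}]=\frac{\beta}{n-1}\,\mathbb{E}[X_{n-1}^2]$ and $\eta_n^2=1$ gives $s(n,n)=1+\frac{\beta}{n-1}u_{n-1}$. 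Plugging the Gamma-ratio expression for $s(n,n+k-1)$ into the boxed formula and simplifying the Gamma factors (using $n+k-1+\beta=2p+n+k-2$, $n+\beta=2p-1+n$, and $(n+k-1)\Gamma(n+k-1)=\Gamma(n+k)$) will produce exactly the prefactor $\frac{n!\,\Gamma(2p+n+k-2)}{(n+k-1)!\,\Gamma(2p-1+n)}$ multiplying $\frac{\beta}{n}s(n,n)$, so it remains only to check the bookkeeping identity $F(n)=\beta\,s(n,n)/n$.

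That last identity is the technical heart, and it requires the exact second moment $u_m$. Conditioning $X_{m+1}^2=(X_m+\eta_{m+1})^2$ and using \eqref{conditional2} gives the linear recursion $u_{m+1}=u_m\big(1+\tfrac{2\beta}{m}\big)+1$ with $u_1=1$. I would solve it by the standard integrating-factor method: the homogeneous part produces a Gamma ratio $\frac{\Gamma(m+2\beta)}{\Gamma(2\beta)\Gamma(m)}$, and the particular part is a sum $\sum_m \frac{\Gamma(m+1)}{\Gamma(m+2\beta)}$ that I would collapse using the telescoping relation $\frac{\Gamma(m+1)}{\Gamma(m+1+2\beta)}=\frac{1}{2\beta-1}\big(\frac{\Gamma(m+1)}{\Gamma(m+2\beta)}-\frac{\Gamma(m+2)}{\Gamma(m+1+2\beta)}\big)$. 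This yields a closed form $u_m=\frac{2\beta}{(2\beta-1)\Gamma(2\beta)}\frac{\Gamma(m+2\beta)}{\Gamma(m)}-\frac{m}{2\beta-1}$ (with $2\beta-1=4p-3$, $2\beta=4p-2$), whose two terms are precisely the origin of the two competing contributions in $F(n)$ — the $2(1-p)/n$ piece and the $\Gamma(n+4p-3)/\Gamma(4p-2)n!$ piece. Substituting into $s(n,n)=1+\frac{\beta}{n-1}u_{n-1}$ and regrouping then gives $F(n)=\beta\,s(n,n)/n$ after recognizing $1-\frac{\beta}{4p-3}=-\frac{2(1-p)}{4p-3}$.

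The main obstacle I anticipate is this closed-form evaluation of $u_m$: the telescoping identity degenerates exactly at $2\beta-1=0$, i.e. $p=3/4$, where both the denominator $3-4p$ in $F(n)$ and the bracketed difference vanish simultaneously. For $p=3/4$ the formula in Theorem \ref{corr} must therefore be read as the ($0/0$) limit, and I would either argue by continuity in $p$ or treat $p=3/4$ as a separate case in which the sum $\sum_m \Gamma(m+1)/\Gamma(m+3/2)^{-1}$-type telescoping is replaced by its confluent limit; the Gamma-function prefactor in the final expression is regular there, so the limiting value is well defined. The remaining manipulations (matching Gamma arguments, the substitutions $\beta=2p-1$ and $4p-2=2\beta$, and the algebraic regrouping into the stated $F(n)$) are routine.
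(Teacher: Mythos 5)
Your strategy is essentially the paper's: both arguments reduce the correlation to a first-order recursion driven by (\ref{conditional2}) whose solution telescopes to a ratio of Gamma functions, and then feed in the exact second moment $\mathbb{E}[X_n^2]$. The only structural difference is which mixed moment you iterate: the paper fixes $n$ and recurses $H_k=(n+k-1)\mathbb{E}[\eta_{n+k}X_n]$ in $k$, obtaining $\mathbb{E}[\eta_n\eta_{n+k}]$ by subtracting $\mathbb{E}[\eta_{n+k}X_{n-1}]$, whereas you recurse $s(n,m)=\mathbb{E}[\eta_n X_m]$ in $m$ and read off the correlation from the single quantity $\frac{2p-1}{n+k-1}s(n,n+k-1)$; the two recursions have the identical ratio $(m+2p-1)/m$, so this is a cosmetic rearrangement. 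Your bookkeeping identities are all correct: $\mathbb{E}[\eta_n\eta_{n+k}]=\frac{\beta}{n+k-1}s(n,n+k-1)$, $s(n,n)=1+\frac{\beta}{n-1}u_{n-1}$, and $F(n)=\beta s(n,n)/n$ is precisely the paper's $F(n)=\frac{2p-1}{n}\bigl[u_n-\bigl(1+\frac{2p-1}{n-1}\bigr)u_{n-1}\bigr]$ rewritten via the recursion $u_n=\bigl(1+\frac{2\beta}{n-1}\bigr)u_{n-1}+1$. You also rederive $\mathbb{E}[X_n^2]$ rather than citing Sch\"utz--Trimper as the paper does, and your caveat about the degenerate case $p=3/4$ (where the telescoping constant $2\beta-1$ vanishes and $F(n)$ must be read as a limit) is a point the paper only addresses in the remark following the theorem.

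The one concrete error is your closed form for $u_m=\mathbb{E}[X_m^2]$. The stated expression $\frac{2\beta}{(2\beta-1)\Gamma(2\beta)}\frac{\Gamma(m+2\beta)}{\Gamma(m)}-\frac{m}{2\beta-1}$ gives $u_1=2\beta+1$ instead of $1$, so it fails the initial condition; the spurious factor $2\beta$ comes from conflating $\Gamma(2\beta)$ with $\Gamma(1+2\beta)=2\beta\,\Gamma(2\beta)$ when assembling the homogeneous solution $\prod_{j=1}^{m-1}\frac{j+2\beta}{j}=\frac{\Gamma(m+2\beta)}{\Gamma(1+2\beta)\Gamma(m)}$. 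The correct formula is
\begin{equation*}
u_m=\frac{1}{2\beta-1}\left(\frac{\Gamma(m+2\beta)}{\Gamma(2\beta)\Gamma(m)}-m\right),
\end{equation*}
which is the expression quoted in the paper. With this correction your final regrouping reproduces $F(n)$ exactly; with your version the $\Gamma(n+4p-3)$ term in $F(n)$ comes out wrong by a factor of $2$ (indeed the final identity $1-\frac{\beta}{4p-3}=-\frac{2(1-p)}{4p-3}$ that you invoke is consistent only with the corrected $u_m$). This is a computational slip, not a flaw in the method: the integrating-factor plus telescoping derivation you describe does yield the correct closed form when carried out carefully.
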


\begin{remark}
For $p=3/4$ one has
\begin{equation}\label{2-20a}
F(n) =\frac{1}{2n}\left(1 + \frac{1}{2} \sum_{m=1}^{n-1}
\frac{1}{m} \right).
\end{equation}
For $k,n \to\infty$ with $x:=k/n$ fixed the correlation function 
has the form
\begin{equation}\label{2-21}
\mathbb{E} [\eta_n \eta_{n+k}] \sim (1+x)^{-2(1-p)} F(n)
\end{equation}
where $F(n)$ decays
algebraically except at the transition point $p=3/4$. In the various regimes 
Stirling's formula for the $Gamma$-function gives
\begin{equation}\label{2-22}
F(n) \sim \left\{ \begin{array}{ll}
\displaystyle \frac{2(2p-1)(1-p)}{3-4p} n^{-1} & p < 3/4 \\[2mm]
\displaystyle \frac{ \ln{n}}{4n} & p = 3/4 \\[2mm]
\displaystyle  \frac{(2p-1)^2}{(4p-3)\Gamma(4p-2)}
n^{-4(1-p)} & p > 3/4.
\end{array} \right.
\end{equation}
It appears that the transition at $p=3/4$ is driven by the strength of the
correlations at time step $n$ rather than by their decay with the time-lag $k$
between increments.
\end{remark}

\section{Auxiliary results}

In this section we present some results that will be used in the section \ref{proof}
where we prove  Theorems \ref{lln} - \ref{corr}.  

Put 
\begin{align*}
a_1 = 1 \, \text{ and } a_n = \prod_{j=1}^{n-1} \left(1+\frac{(2p-1)}{j}\right) \, \text{ for } n \geq 2.
\end{align*}
Define the filtration $\mathcal{F}_n=\sigma(\eta_1, \ldots, \eta_n)$ and $M_n = \frac{X_n - \mathbb{E}[X_n]}{a_n}$ for $n \geq 1$. 
We claim that $\{ M_n \}_{n \geq 1}$ is a martingale with respect to $\{ \mathcal{F}_n \}_{n \geq 1}$, for
\begin{eqnarray}
\mathbb{E}[M_{n+1}| \mathcal{F}_n] &=& \frac{(X_n - \mathbb{E}[X_n])}{a_{n+1}} + \frac{\mathbb{E}[\eta_{n+1}|\mathcal{F}_n] - \mathbb{E}[\eta_{n+1}]}{a_{n+1}} \nonumber \\
&=& \frac{(X_n - \mathbb{E}[X_n])}{a_{n+1}} + \frac{(2p-1) \frac{X_n}{n} -(2p-1) \frac{\mathbb{E}[X_n]}{n}}{a_{n+1}} \nonumber \\
&=& \frac{(X_n - \mathbb{E}[X_n])}{a_{n+1}} + \frac{\frac{(2p-1)}{n}(X_n - \mathbb{E}[X_n])}{a_{n+1}} \nonumber \\
&=& (X_n - \mathbb{E}[X_n]) \frac{\left(1+\frac{(2p-1)}{n}\right)}{a_{n+1}} \nonumber \\
&=& M_n .\nonumber
\end{eqnarray}

Before proving the next lemma, let us make an important remark on $a_n$. Using the gamma function, we can 
rewrite $a_n$ in the following way
\begin{align}\label{approx}
a_{n} = \frac{\Gamma(n + 2p -1)}{\Gamma(n)\Gamma(2p)} \sim \frac{n^{2p -1}}{\Gamma(2p)} \, \text{ as } n \to \infty.
\end{align}
Notice that $a_n \to \infty$ as $n \to \infty$ if $p > 1/2$, $a_n =1$ for $n \geq 1$ if $p = 1/2$, and $a_n \to 0 $ if $p < 1/2$. 

\begin{lemma}\label{diverges}
The series $\displaystyle \sum_{n=1}^{\infty} \frac{1}{a^2_n}$ converges if and only if $p > 3/4$.
\end{lemma}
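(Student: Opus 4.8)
The plan is to reduce the question to the standard convergence criterion for a series of the form $\sum_n n^{-s}$, using the asymptotic expansion of $a_n$ already recorded in (\ref{approx}). Since
\begin{equation}
a_n = \frac{\Gamma(n+2p-1)}{\Gamma(n)\Gamma(2p)} \sim \frac{n^{2p-1}}{\Gamma(2p)} \quad \text{as } n \to \infty, \nonumber
\end{equation}
squaring and inverting gives
\begin{equation}
\frac{1}{a_n^2} \sim \frac{\Gamma(2p)^2}{n^{4p-2}} \quad \text{as } n \to \infty. \nonumber
\end{equation}
In particular the ratio of $1/a_n^2$ to $n^{-(4p-2)}$ tends to the finite strictly positive constant $\Gamma(2p)^2$.

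First I would invoke the limit comparison test: because $1/a_n^2$ and $n^{-(4p-2)}$ are eventually positive and their ratio converges to a finite positive limit, the two series $\sum_n 1/a_n^2$ and $\sum_n n^{-(4p-2)}$ have the same convergence behaviour. The latter is a series of the form $\sum_n n^{-s}$ with $s=4p-2$, which converges if and only if $s>1$, i.e. $4p-2>1$, that is $p>3/4$. This yields the claim.

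The argument is short precisely because the only nontrivial input, the Gamma-function ratio asymptotic $\Gamma(n+2p-1)/\Gamma(n) \sim n^{2p-1}$, is already supplied by (\ref{approx}); it holds for every fixed real exponent, so no case distinction on the sign of $2p-1$ is required in the comparison. The only points deserving a word of care are the degenerate boundary values, and these are consistent with the threshold: at $p=1/2$ one has $a_n \equiv 1$, so the series is $\sum_n 1$ and diverges, while for $p<1/2$ one has $a_n \to 0$, so the terms $1/a_n^2$ do not even tend to zero and divergence is immediate — both in agreement with $p \le 3/4$. Thus there is no genuine obstacle: the content of the lemma is entirely encoded in the point $4p-2=1$ at which the comparison series changes its convergence behaviour.
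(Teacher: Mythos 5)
Your proof is correct, but it takes a genuinely different route from the paper. The paper applies Raabe's test to $b_n = 1/a_n^2$, computing $n(1 - b_n/b_{n+1}) \to 4p-2$ directly from the product defining $a_n$; since Raabe is inconclusive precisely when the limit equals $1$, i.e.\ at $p=3/4$, the paper must then treat that critical case separately, and there it falls back on the asymptotic (\ref{approx}) to bound $1/a_n^2 \geq 1/(2n)$ and conclude divergence by comparison with the harmonic series. You instead use (\ref{approx}) from the outset and run a limit comparison against the $p$-series $\sum n^{-(4p-2)}$, which settles all three regimes --- including the boundary $4p-2=1$ --- in one stroke. The trade-off is clear: the paper's Raabe argument is self-contained at the level of the recursion $a_{n+1}/a_n = 1 + (2p-1)/n$ and needs no Gamma-function asymptotics except at the single critical point, whereas your argument is shorter and uniform but rests entirely on the strength of the asymptotic $\Gamma(n+2p-1)/\Gamma(n) \sim n^{2p-1}$; since the paper already records that asymptotic in (\ref{approx}) and uses it elsewhere (indeed in its own treatment of $p=3/4$), your version is a legitimate and arguably cleaner proof of the same lemma. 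The only caveat, which applies equally to the paper's proof, is the degenerate endpoint $p=0$, where $a_n=0$ for $n\geq 2$ and the terms $1/a_n^2$ are not even defined.
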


\begin{proof}
Let $b_n = \frac{1}{a^2_n}$ and assume that $p\neq 3/4$. Then,
\begin{eqnarray}
n \left(1 - \frac{b_{n}}{b_{n+1}}\right) &=& \frac{2 (2p-1)+\frac{(2p-1)^2}{n}}{1 + 2 \frac{(2p-1)}{n}+\frac{(2p-1)^2}{n^2}}
\end{eqnarray}
Therefore, 
\begin{eqnarray}
R:= \displaystyle \lim_{n \rightarrow +\infty} n \left(1 - \frac{b_{n}}{b_{n+1}}\right) = 4p-2. \nonumber 
\end{eqnarray}

By the Raabe criteria the series is convergent if $R > 1$ and it is divergent if $R < 1$. Note that $R > 1$ if and 
only if $p > \frac{3}{4}$ and $R < 1$ if and only if $p < \frac{3}{4}$.

Assume now that $p=\frac{3}{4}$. Since $2p -1 = 1/2 > 0$, it follows by (\ref{approx}) that 
\begin{align}\label{pigualtresquarto}
a_n = \prod_{j=1}^{n-1} \left(1+\frac{1/2}{j}\right) \sim \frac{n^{1/2}}{\sqrt{\pi/4}}.
\end{align}
Thence $1/a_n^{2} \geq \frac{1}{2n}$ for $n$ large enough and $ \displaystyle \sum_{n=2}^{\infty} \frac{1}{a^2_n}$ 
diverges.
\end{proof}

Let $(D_n)_{n \geq1}$ be the martingale differences defined by $D_1 = M_1$ and, for $n \geq 2, D_n = M_n - M_{n-1}$.
Observe that
\begin{eqnarray}\label{Dj}
D_j &=&  \frac{X_j - \mathbb{E}[X_j]}{a_j} - \frac{X_{j-1} - \mathbb{E}[X_{j-1}]}{a_{j-1}} \nonumber \\
&=& \frac{\eta_j - \mathbb{E}[\eta_j]}{a_j} - \frac{\left(X_{j-1} - \mathbb{E}[X_{j-1}]\right)}{j-1} \frac{2p-1}{a_j} . 
\end{eqnarray}
Furthermore, since the increments $\eta_j$'s are uniformly bounded, it is trivial to see that 
\begin{align}\label{4aj}
|D_j| \leq \frac{4}{a_j}.
\end{align}

We state without proof the Kronecker lemma.
\begin{lemma}\label{Kro}
Let $(x_n)_{n \geq 1}$ and $(b_n)_{n \geq 1}$ be sequences of real numbers such that $0 < b_n \nearrow +\infty$. 
If $\sum_k x_k/b_k$ converges, then $\displaystyle \lim_{n \rightarrow +\infty} \frac{\sum_{k=1}^n x_k}{b_n} = 0$.
\end{lemma}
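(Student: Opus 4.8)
The plan is to reduce the statement to a Toeplitz/Ces\`aro-type averaging argument by means of summation by parts. First I would set $S_0 := 0$ and $S_n := \sum_{k=1}^n x_k/b_k$, and write $S := \lim_{n\to\infty} S_n$ for the limit that exists by hypothesis. The purpose of introducing the partial sums $S_n$ is that they let me express each summand as $x_k = b_k (S_k - S_{k-1})$, which converts the quantity $\sum_{k=1}^n x_k$ into a form to which Abel summation applies directly.

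The key algebraic step is this summation by parts. Expanding $\sum_{k=1}^n b_k(S_k - S_{k-1})$, shifting the index in the telescoped piece, and using $S_0 = 0$, I expect to arrive at
\[
\sum_{k=1}^n x_k = b_n S_n - \sum_{k=1}^{n-1} (b_{k+1} - b_k) S_k ,
\]
so that dividing through by $b_n$ yields
\[
\frac{1}{b_n}\sum_{k=1}^n x_k = S_n - \frac{1}{b_n}\sum_{k=1}^{n-1}(b_{k+1}-b_k)\,S_k .
\]

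The final step is to recognize the subtracted term as a weighted average of the $S_k$ and to show it converges to $S$. Since $b_n$ is strictly increasing, the weights $(b_{k+1}-b_k)/b_n$ are nonnegative and satisfy $\sum_{k=1}^{n-1}(b_{k+1}-b_k)/b_n = (b_n - b_1)/b_n \to 1$. A standard argument then applies: writing $S_k = S + (S_k - S)$ splits the average into a piece equal to $S(b_n-b_1)/b_n \to S$ plus a remainder; given $\varepsilon > 0$, one picks $N$ with $|S_k - S| < \varepsilon$ for $k \geq N$, notes that the head $k < N$ is a fixed quantity divided by $b_n \to \infty$ and hence vanishes, while the tail is bounded by $\varepsilon\,(b_n - b_N)/b_n \leq \varepsilon$. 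This shows $\frac{1}{b_n}\sum_{k=1}^{n-1}(b_{k+1}-b_k)S_k \to S$, and since $S_n \to S$ as well, the right-hand side tends to $S - S = 0$, which is the claim.

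I expect the main obstacle to be bookkeeping rather than anything conceptual: one must track the boundary terms in the summation by parts carefully and verify that the increments telescope to $b_n - b_1$, so that the Ces\`aro normalization comes out exactly right. The hypothesis $0 < b_n \nearrow +\infty$ is used twice, once to guarantee nonnegativity of the averaging weights and once to force the fixed head of the sum to vanish in the limit.
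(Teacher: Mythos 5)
Your proof is correct: it is the classical Abel-summation (summation by parts) proof of Kronecker's lemma, and the Toeplitz-type averaging argument at the end is carried out properly. The paper itself states this lemma explicitly \emph{without} proof, citing it as a standard fact, so there is no argument in the paper to compare against; your writeup simply supplies the standard textbook derivation. One negligible quibble: you say ``since $b_n$ is strictly increasing,'' whereas the hypothesis $0 < b_n \nearrow +\infty$ only guarantees that the sequence is nondecreasing; but nonnegativity of the weights $(b_{k+1}-b_k)/b_n$ needs only monotonicity, so nothing breaks.
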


\section{Proofs}\label{proof}

\subsection{Strong law of large numbers}

\begin{proof}[Proof of Theorem  \ref{lln}]
First observe that 
\begin{eqnarray}
\label{an}
\frac{a_n}{n} &=& \frac{1}{n} \displaystyle \prod_{j=1}^{n-1} \left(1+\frac{2p-1}{j}\right) \nonumber \\
&=& \displaystyle \prod_{j=1}^{n-1} \frac{j+2p-1}{j+1} \nonumber \\
&=& \displaystyle \prod_{j=1}^{n-1} \left( 1 - \frac{2(1 -p)}{j+1} \right)
\end{eqnarray}
where $0 \leq \frac{j+2p-1}{j+1} \leq 1$. Observe that $\displaystyle \sum_{j=1}^{\infty} \frac{2(1 -p)}{j+1} = \infty$ 
implies that $\displaystyle \lim_{n \rightarrow +\infty} \frac{a_n}{n} = 0$. Furthermore, it follows from (\ref{an}) 
and the fact that $0 \leq \frac{j+2p-1}{j+1} \leq 1$ that $(a_n/n)_{n \geq 1}$ is a non-increasing sequence.

Define $N_j = \frac{a_j}{j} D_j$. By (\ref{4aj}) we may conclude that $(N_j)_{j \geq 1}$ is a sequence of martingale 
differences such that 
\begin{eqnarray}
\sum_{j=1}^{+\infty} \mathbb{E}[N^2_j | \mathcal{F}_{j-1}] \leq  \sum_{j=1}^{+\infty} \frac{4^2}{j^2} < + \infty \text{ a.s.}. \nonumber
\end{eqnarray}
Now, Theorem 2.17 in \cite{HH} implies that $\displaystyle \sum_{j=1}^{+\infty} N_j$ converges almost surely. 
Since $n/a_n \nearrow \infty$, a direct application of Lemma \ref{Kro} gives
\begin{eqnarray*}
\frac{a_n}{n} \displaystyle \sum_{j=1}^n D_j = \frac{a_n}{n} M_n \to 0 \, \text{ almost surely. }
\end{eqnarray*}
Next, note that 
\begin{eqnarray}
\frac{X_n - \mathbb{E}[X_n]}{n} &=& \frac{a_n}{n} M_n, \nonumber 
\end{eqnarray}
i.e. the law of large number holds for $X_n$.
\end{proof}

\begin{corollary}
Let $(X_n)_{n \geq 1}$ be the elephant random walk. Then
\begin{eqnarray}
\lim_{n \rightarrow \infty} \frac{X_n}{n} = 0 \ \mbox{a.s.} \nonumber
\end{eqnarray}
for any value of $q$ and $p \in [0,1)$.
\end{corollary}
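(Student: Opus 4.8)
The plan is to deduce this directly from the strong law of large numbers (Theorem \ref{lln}) by showing that the deterministic centering term $\mathbb{E}[X_n]/n$ vanishes in the limit. First I would split
\begin{equation}
\frac{X_n}{n} = \frac{X_n - \mathbb{E}[X_n]}{n} + \frac{\mathbb{E}[X_n]}{n}. \nonumber
\end{equation}
Theorem \ref{lln} already guarantees that the first summand converges to $0$ almost surely for any value of $q$ and any $p \in [0,1)$, so the entire question reduces to controlling the second, purely deterministic, summand.

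For that term I would invoke the asymptotics in (\ref{mean}), namely $\mathbb{E}[X_n] \sim \frac{2q-1}{\Gamma(2p)} n^{2p-1}$, from which $\mathbb{E}[X_n]/n \sim \frac{2q-1}{\Gamma(2p)} n^{2p-2}$. Since the hypothesis is $p<1$, the exponent satisfies $2p-2<0$, so this expression converges to $0$; when $2q-1=0$ the term is identically zero and the conclusion is immediate. Adding the almost-sure limit of the first summand to the (deterministic) limit $0$ of the second then gives $X_n/n \to 0$ almost surely.

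I expect essentially no obstacle here, as the corollary is a routine consequence of the law of large numbers combined with the sublinear growth of the mean. The only point that deserves a moment's attention is verifying that the exponent $2p-1$ governing the growth of $\mathbb{E}[X_n]$ is strictly smaller than $1$ exactly when $p<1$, which is precisely the standing hypothesis; the boundary case $p=1$ is excluded and is genuinely exceptional, as recorded in the remark following Theorem \ref{lln}, where $X_n/n = \eta_1$ does not converge to a constant.
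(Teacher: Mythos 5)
Your proposal is correct and is essentially identical to the paper's own proof: both decompose $X_n/n$ into the centered term handled by Theorem \ref{lln} and the deterministic term $\mathbb{E}[X_n]/n$, and both use the asymptotics (\ref{mean}) to conclude that $\mathbb{E}[X_n]/n \sim \frac{2q-1}{\Gamma(2p)} n^{2(p-1)} \to 0$ for $p<1$. No issues.
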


\begin{proof}
It follows from (\ref{mean}) that $\frac{\mathbb{E}[X_n]}{n} \sim \frac{2q-1}{\Gamma(2p)} n^{2(p-1)}$,
which in turn goes to 0 as $n \to \infty$, and the claim follows.
\end{proof}

\subsection{Central limit theorem}

\begin{proof}[Proof of Theorem \ref{clt}]

Define 
\begin{align*}
s^2_1=q(1-q) \, \text{ and } s^2_n := q(1-q) + \sum_{j=2}^n \frac{1}{a^2_j} \, \text{ for } n \geq 2.
\end{align*}

Before proving the results we need to say that both claims in Theorem \ref{clt} amounts to show that   
\begin{eqnarray}\label{ansn}
\frac{X_n - \mathbb{E}[X_n]}{a_n s_n} \xrightarrow{\mathcal{D}} \mathcal{N}(0,1).
\end{eqnarray}
Indeed, we will show that (\ref{ansn}) holds. First recall that $\mathbb{E}(X_n) \sim  (2q - 1) n^{2p -1}/\Gamma(2p)$, by (\ref{mean}).
Combining Lemma \ref{diverges} and (\ref{approx}) we get
\begin{align*}
& s_n^{2} \sim  \Gamma(2p)^{2} \frac{n^{3 - 4p}}{3 - 4p}, \text{ if } p < 3/4 \\
& s_n^{2} \sim \Gamma(3/2)^{2} \ln n, \text{ if } p =3/4
\end{align*}
which in turn implies that $a_n s_n \sim \sqrt{n \ln n}$, if $p = 3/4$, and $a_n s_n \sim \sqrt{n/(3 - 4p)}$, 
if $p < 3/4$.

We now turn our attention to the proof of (\ref{ansn}). We will verify the two conditions of Corolary 3.1 
of \cite{HH} in order to get our result.

We begin by checking the conditional Lindeberg condition. Let $D_{nj} = \frac{D_j}{s_n}$ for 
$1 \leq j \leq n$. Given $\varepsilon >0$, we need to prove that
\begin{align}\label{tozero}
\sum_{j = 1}^{n}\mathbb{E}(D^{2}_{nj} \mathbb{I} (|D_{nj}| > \varepsilon) | \mathcal{F}_{j-1}) \to 0  \text{ a.s.}.
\end{align}
If $p \in [1/2, 3/4]$, then $a_n \geq 1$ and $s_n \to \infty$. Next note that 
$\lbrace |D_{nj}| > \varepsilon \rbrace \subset \lbrace \frac{4}{s_n} > \varepsilon \rbrace $,
but the last set is empty for $n$ large enough, so (\ref{tozero}) holds.
If $p < 1/2$, observe that $\lim_{n \to \infty }a_n \, s_n  = \infty$ and $a_j^{-1} \leq a_n^{-1}$ for 
$j= 1, \ldots, n$. Then it is easy to see that 
$\lbrace |D_{nj}| > \varepsilon \rbrace \subset \lbrace \frac{4}{a_n s_n} > \varepsilon \rbrace $,
and again the latter set is empty for $n$ sufficiently large.

Next we  check the conditional variance condition. Since $ \frac{\sum_{j=1}^n D_j}{s_n} = \frac{S_n - \mathbb{E}[S_n]}{a_n s_n}$, 
we must estimate $\mathbb{E}[D_j^2 | \mathcal{F}_{j-1}]$. Observe that
\begin{eqnarray}
D^2_j 
%&=& \left(\frac{\eta_j - \mathbb{E}[\eta_j]}{a_j} - \frac{\left(X_{j-1} - \mathbb{E}[X_{j-1}]\right)}{j-1} \frac{2p-1}{a_j}\right)^2  \nonumber \\
%&=& \frac{1}{a^2_j}\left(\left(\eta_j - \mathbb{E}[\eta_j]\right)^2 + \left(\frac{\left(X_{j-1} - \mathbb{E}[X_{j-1}]\right)}{j-1}\right)^2 (2p-1)^2\right) \nonumber \\
%&-& \frac{2}{a^2_j}\left(\left(\eta_j - \mathbb{E}[\eta_j]\right) \left(\frac{\left(X_{j-1} - \mathbb{E}[X_{j-1}]\right)}{j-1}\right) (2p-1)\right) \nonumber \\
&=& \frac{1}{a^2_j}\left(\left(1 - 2 \mathbb{E}[\eta_j] \eta_j + \mathbb{E}^2[\eta_j]\right) + \left(\frac{\left(X_{j-1} - \mathbb{E}[X_{j-1}]\right)}{j-1}\right)^2 (2p-1)^2\right) \nonumber \\
&-& \frac{2}{a^2_j}\left(\left(\eta_j - \mathbb{E}[\eta_j]\right) \left(\frac{\left(X_{j-1} - \mathbb{E}[X_{j-1}]\right)}{j-1}\right) (2p-1)\right), \nonumber 
\end{eqnarray}
for $\eta^2_j = 1$.

It follows from the law of large numbers for $X_n$ that 
\[
\mathbb{E}\left[\frac{1}{a^2_j}\left(\frac{\left(X_{j-1} - \mathbb{E}[X_{j-1}]\right)}{j-1}\right)^2 (2p-1)^2 | \mathcal{F}_{j-1} \right]
\]

and 

\[
\mathbb{E}\left[\frac{1}{a^2_j}\left(\left(\eta_j - \mathbb{E}[\eta_j]\right) \left(\frac{\left(X_{j-1} - \mathbb{E}[X_{j-1}]\right)}{j-1}\right) (2p-1)\right) | \mathcal{F}_{j-1} \right]
\]
are $\mbox{o}\left(\frac{1}{a^2_j}\right)$. Therefore, in order to obtain the assimptotic behavior of $\mathbb{E}\left[D^2_j | \mathcal{F}_{j-1}\right]$, 
we need to study the behavior of $\frac{1}{a^2_j}\left(\mathbb{E}\left[\left(1 - 2 \mathbb{E}[\eta_j] \eta_j + \mathbb{E}^2[\eta_j]\right) | \mathcal{F}_{j-1}\right]\right)$.

%Since

%\begin{eqnarray}
%\mathbb{E}\left[\left(1 - 2 \mathbb{E}[\eta_j] \eta_j + \mathbb{E}^2[\eta_j]\right)/\mathcal{F}_{j-1}\right] &=& 1 - 2 \mathbb{E}[\eta_j] \mathbb{E}\left[  \eta_j/\mathcal{F}_{j-1}\right] + \mathbb{E}^2[\eta_j], \nonumber
%\end{eqnarray}
It follows from (\ref{conditional}) and (\ref{asymptoticexpectation})  that, for $j \geq 2$,

\begin{eqnarray}
\frac{1}{a^2_j} \mathbb{E}\left[\left(1 - 2 \mathbb{E}[\eta_j] \eta_j + \mathbb{E}^2[\eta_j]\right) | \mathcal{F}_{j-1}\right] &\sim& \frac{1}{a^2_j}\left(1-2 \ \frac{(2p-1)(2q-1)}{\Gamma(2p)(j-1)^{2(1-p)}} \frac{(2p-1) X_{j-1}}{j-1}\right) \nonumber \\
&+& \frac{1}{a^2_j} \left(\frac{(2p-1)(2q-1)}{\Gamma(2p)(j-1)^{2(1-p)}}\right)^2 \nonumber \\
&=& \frac{1}{a^2_j} + \ \mbox{o}\left(\frac{1}{a^2_j}\right). \nonumber
\end{eqnarray}
Therefore, for $j \geq 2$ we have that
\begin{eqnarray}\label{Conditionaldj}
\mathbb{E}\left[D^2_j | \mathcal{F}_{j-1}\right] = \frac{1}{a^2_j} + \ \mbox{o}\left(\frac{1}{a^2_j}\right).
\end{eqnarray}
Finally, for $j=1$, we have that
\begin{eqnarray}\label{Conditionald1}
\mathbb{E}\left[D^2_1 | \mathcal{F}_0\right] &=& \frac{1}{a^2_1} q (1-q),
\end{eqnarray}
where $a_1 = 1$. Recall that $s^2_1=q(1-q)$ and $s^2_n := q(1-q) + \sum_{j=2}^n \frac{1}{a^2_j}$ for 
$n \geq 2$. From Lemma \ref{diverges} we have that $s_n \to \infty$ as $n \to \infty$ if and only 
if $p \leq 3/4$. From this fact, (\ref{Conditionaldj}), (\ref{Conditionald1}) and Theorem \ref{lln} we may conclude that
\begin{eqnarray}\label{as}
\frac{1}{s_n^2} \displaystyle \sum_{j=1}^n \mathbb{E}[D_j^2 | \mathcal{F}_{j-1}] \xrightarrow{a.s} 1.
\end{eqnarray}
Therefore, by Corolary 3.1. in \cite{HH} we conclude that
\begin{eqnarray}
\displaystyle \sum_{j=1}^n D_{nj} = \frac{X_n - \mathbb{E}[X_n]}{a_n s_n} \xrightarrow{\mathcal{D}} \mathcal{N}(0,1). \nonumber
\end{eqnarray}
%Assume now that $p = \frac{3}{4}$. The proof follows in the same lines of the proof above if we can show that $a_n A_n \rightarrow + \infty$ when $n \rightarrow +\infty$. When $p=\frac{3}{4}$, it follows from \ref{pigualtresquarto} that $a_n A_n \sim C \pi/4 \sqrt{n \ln(n)}$ for some positive constant. From this we conclude that $a_n A_n \rightarrow +\infty$ when $n \rightarrow +\infty$.
\end{proof}

\subsection{Almost sure convergence for $p > \frac{3}{4}$}

\begin{proof}[Proof of Theorem \ref{superdiff}]
First note that $\mathbb{E}(X_n) = (2q - 1)a_n$ and $a_n \sim n^{2p -1}/\Gamma(2p)$, by (\ref{mean}) and (\ref{approx}).
Lemma \ref{diverges} says that if $p > \frac{3}{4}$, then $\displaystyle \sum_{n=1}^{\infty} \frac{1}{a^2_n} < \infty$. 
For $|D_j| \leq \frac{4}{a_j}$ by (\ref{4aj}), then Theorem 12.1 in \cite{Wi} implies that
$M_n = \displaystyle \sum_{j=1}^n D_j = \frac{X_n - \mathbb{E}[X_n]}{a_n} \to M$ almost surely and in $\mathcal{L}^{2}$.
These results and $\mathbb{E}(M_n) = 0$ together imply that  
\begin{align*}
|\mathbb{E}(M)| = |\mathbb{E}(M - M_n)| \leq \mathbb{E}(|M - M_n|) \leq \mathbb{E}(|M - M_n|^{2})^{1/2} \to 0 \, \text{ as } n \to \infty. 
\end{align*}
Furthermore, the fact that $(M_n)_{n \geq 1}$ is a bounded martingale in $\mathcal{L}^{2}$ plus its a.s. convergence
to $M$ implies that
\begin{align*}
Var(M) = \lim_{n \to \infty}Var(M_n) = \sum_{j=1}^{\infty}\mathbb{E}(D_j^{2}) > 0.
\end{align*}
Then, $M$ is not a degenerate r.v.. The claim that $M$ is not normal follows from the
explicit form of the skewness and kurtosis of the distribution of $X_n$
obtained by exact computation in \cite{Para06}.
\end{proof}

\subsection{Correlations}

\begin{proof}[Proof of Theorem \ref{corr}]

To compute the correlation function $\E[\eta_{n+k}\eta_{n}]$
for $n\geq1$ and $k\geq 1$ we introduce
the auxiliary quantity $H_k = (n+k-1)
\E[\eta_{n+k}X_n]$. It follows from (\ref{conditional2}) that
\begin{eqnarray}
H_{k+1} = \left(1+\frac{2p-1}{n+k-1}\right) H_k, \quad H_1 = (2p-1) \E[X_n^2]
\end{eqnarray}
which yields by induction
\begin{eqnarray}
H_k = \frac{(n-1)! \Gamma(2p+n+k-2)}{(n+k-2)!
\Gamma(2p-1+n)} H_1.
\end{eqnarray}
Therefore
\begin{eqnarray}
\E[\eta_{n+k}X_n] & = & \frac{n! \Gamma(2p+n+k-2)}{(n+k-1)!
\Gamma(2p-1+n)} \E[\eta_{n+1}X_n] \nonumber \\
& = & (2p-1) \frac{(n-1)! \Gamma(2p+n+k-2)}{(n+k-1)!
\Gamma(2p-1+n)} \E[X_n^2].
\end{eqnarray}
Subtracting $\E[\eta_{n+k}X_{n-1}]$ yields $\E[\eta_{n+k}\eta_{n}]$ in the form
stated in the theorem with
\begin{eqnarray}
F(n) = \frac{2p-1}{n} \left[ \E[X_n^2] -
\left(1+\frac{2p-1}{n-1}\right) \E[X_{n-1}^2]\right].
\end{eqnarray}
With the explicitly known expression \cite{ST} 
\begin{eqnarray}
\E[X_n^2] = \frac{n}{4p-3}\left(
\frac{\Gamma(n+4p-2)}{\Gamma(4p-2)\Gamma(n+1)} - 1 \right)
\end{eqnarray}
we arrive at the desired result.
\end{proof}


\begin{thebibliography}{99}

\bibitem{Baur16}
E. Baur and J. Bertoin,
Elephant Random Walks and their connection to P\'olya-type urns.
https://arxiv.org/abs/1608.01305 (4 August 2016).

\bibitem{daSi13}
da Silva, M.A.A., Cressoni, J.C., Sch\"utz, G.M., Viswanathan, G.M., Trimper, S., 
Non-Gaussian propagator for elephant random walks.
Phys. Rev. E \textbf{88}, 022115 (2013)

\bibitem{HH} Hall, P.; Heyde, C.C. 
Martingale Limit Theory and Its Application. Academic Press, New York (1980).

\bibitem{GNL} Gonz\'alez-Navarrete, M., Lambert, R., On dependent Bernoulli sequences with memory lapses. arXiv: 1607.08299.

\bibitem{Harb14}
Kumar, N., Harbola, U., Lindenberg, K.,
Memory-induced anomalous dynamics: Emergence of diffusion, subdiffusion, and superdiffusion
from a single random walk model.
Phys. Rev. E \textbf{82}, 021101 (2010)

\bibitem{Harr09}
R J Harris and H Touchette ,
Current fluctuations in stochastic systems with long-range memory.
J. Phys. A: Math. Theor. \textbf{42} 342001 (2009)

\bibitem{Kuer16}
R. K\"ursten,
Random recursive trees and the elephant random walk.
Phys. Rev. E \textbf{93}, 032111 (2016)

\bibitem{Para06}
%F. N. C. Paraan, and J. P. Esguerra
Paraan, F.N.C., Esguerra, J.P.,
Exact moments in a continuous time random walk with complete memory of its history.
Phys. Rev. E \textbf{74}, 032101 (2006).

\bibitem{ST} Sch\"utz, G. M.; Trimper, S. Elephants can always remember: Exact long-range memory effects in a non-Markovian random walk. Phys. Rev. E \textbf{70}, 045101 (2004).

\bibitem{Wi} Williams, D. 
Probability with Martingales. Cambridge University Press, Cambridge (1991).

\bibitem{WuQiYang} Lan Wu, Yongcheng Qi, Jingping Yang, Asymptotics for dependent Bernoulli random variables, Statistics and Probability Letters, \textbf{82}, Issue 3 (2012).


\end{thebibliography}
\end{document}